\newcommand{\MREE}{MREE}
\newtheorem{definition}{Definition}
\newtheorem{theorem}{Theorem}
\newenvironment{proof}{\strut\\\noindent{\bf Proof~}}{\strut\\\strut\hfill$\bf qed$\\}
\title{Modeling the Feedback of AI Price Estimations on Actual Market Values
\thanks{
On February 15, 2022 we uploaded in overleaf the first draft of this paper under the name {\em Public AI on house price estimations through Zillow may influence a monotonic house price increase and inflation forever according to simulations}, \url{https://www.overleaf.com/read/yttcffkrhvjf\#7120e1}
.}
}
\author{
Viorel Silaghi, Zobaida Alssadi, Ben Mathew, Majed Alotaibi, Ali Alqarni, Marius Silaghi\\Florida Institute of Technology
}
\date{}
\begin{document}
\maketitle

\begin{abstract}
\begin{quote}

Public availability of Artificial Intelligence generated information can change the markets forever, and its factoring into economical dynamics may take economists by surprise, out-dating models and schools of thought.
Real estate hyper-inflation is not a new phenomenon but its consistent and almost monotonous persistence over 12 years, coinciding with prominence of public estimation information from Zillow, a successful Mass Real Estate Estimator (\MREE{}), could not escape unobserved.
What we model is a repetitive theoretical game between the MREE and the home owners, where each player has secret information and expertise.
If the intention is to keep housing affordable and maintain old American lifestyle with broad home-ownership, new challenges are defined.
Simulations show that a simple restriction of MREE-style price estimation availability to opt-in properties may help partially reduce feedback loop by acting on its likely causes, as suggested by experimental simulation models.
The conjecture that the \MREE{} pressure on real estate inflation rate is correlated with the absolute \MREE{} estimation errors, which is logically explainable, is then validated in simulations.
\end{quote}

\end{abstract}

\section{Introduction}

{\em
There is a somewhat dated anecdote~\cite{anecdote2} that goes as follows: In a Native American tribe, the people went to the chief and asked how the winter was expected to be. The chief, having received a modern education, had not been taught how his forefathers had predicted weather decades before. As he was a precautious fellow, he replied ``This will be a cold winter. Go gather wood!'' Desiring to give more accurate advice to his tribe, the chief called the meteorologist to ask how the winter was expected to be. ``This winter will be very cold,'' he replied. The chief returned to his tribe, summoned them, and warned them ''This will be a particularly cold winter. Go and gather more wood!'' Some time later, the chief again called the meteorologist to ask if they were sure the winter would be so cold. The meteorologist replied ``Yes, it seems this winter will be extraordinarily cold!'' The chief summoned his tribe again and warned them ``We are headed in for an extremely rough winter. Go and gather even more wood!'' Ever seeking the most accurate information for his tribe, the chief yet again called the meteorologist a few weeks later to ask if they were sure that the winter would be so cold. ``Yes,'' the meteorologist replied, ``we are more certain than ever that this winter will be the harshest ever seen!'' ``How can you be so sure?'', the chief inquired. ``Never before have we seen the natives so feverishly gathering wood!''
}\\[0.5cm]

Such a phenomenon as the one described in the above anecdote may explain current trends in the American housing market, as observed by realtors~\cite{zillow_distortion}. Artificial Intelligence applications led to Mass Real Estate Estimators (\MREE{}) of which the best well known is Zillow. The MREEs have attained~\cite{zillow_fortune} a prominent position in the estimation of prices for houses in the American market, such that in a seller's market offers to purchase houses at a price lower than MREE's estimation are generally seen as non-starters~\cite{zillow_research}. Thus, most houses end up selling at least for as much as the MREE predicted. The catch is that proprietary predictions like those from a MREE's are very likely influenced by the selling price of neighboring homes. So whenever a house sells above the MREE's estimation, it is expected to increase the predicted value of all nearby houses, maintaining a house price inflation and investment race that further supports a seller's market.

Realtors suggest the estimation errors to 20\% of their classic appraisal approaches~\cite{zillow_under}.
We model a phenomenon based on which, any errors in a widely available MREE's price estimations coupled with the algorithm that feeds them back into selling prices leads to a ratchet effect which we conjecture to explain reasonably well a major factor driving the recently seen explosion of the American housing market. In this report, we model the effects of public availability of price estimates on market behavior and attempt to predict the consequences and potential endgame of the feedback loop created by these estimates, as well as impacts of possible government actions.

\paragraph{Absolute \MREE{} errors propagating into inflation}
The value of a house is a composition of its location value with the value of the construction features.
In a seller's market, when the estimation from \MREE{} for a house is underestimating the house by $\Delta$ to price $x$, the owner or competing bidding buyers will know and transaction the house at the right price $x+\Delta$. At that point the \MREE{} will wrongly conclude that there was an inflation of the location by $\Delta$ and will raise neighboring houses estimates with an impact of the $\Delta$ propagation.

Alternatively, when the estimation from \MREE{} for a given house is overestimating a house by $\Delta$ to a price $x$, the
buyers will be emboldened to trust the \MREE{} reputation and the seller happily adopts the \MREE{} price. This is leading to an immediate actual local inflation by $\Delta$ that propagates in the neighborhood since better houses around will have arguments to be assessed higher.

As such, both overestimating and underestimating errors end up generating inflation phenomena proportional with the absolute error. In the following we will propose models to quantify and simulate these phenomena, and report on experiments with parameters matching publicly available data.

\section{Background and Concepts}

\begin{figure}[!th]
    \centering
    \includegraphics[scale=0.5, trim=100 100 50 20, clip=]{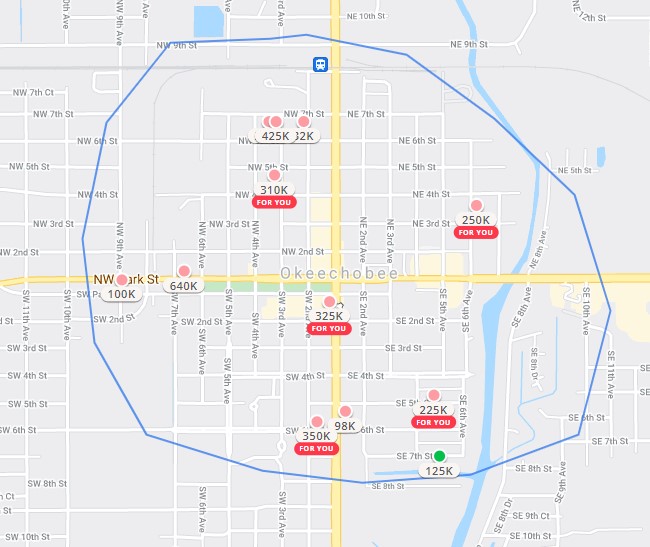}
    \caption{Zillow map with listings (Feb 2022)}
    \label{fig:my_label}
\end{figure}

\begin{figure}[!th]
    \centering
    \includegraphics[scale=0.3, trim=100 100 50 20, clip=]{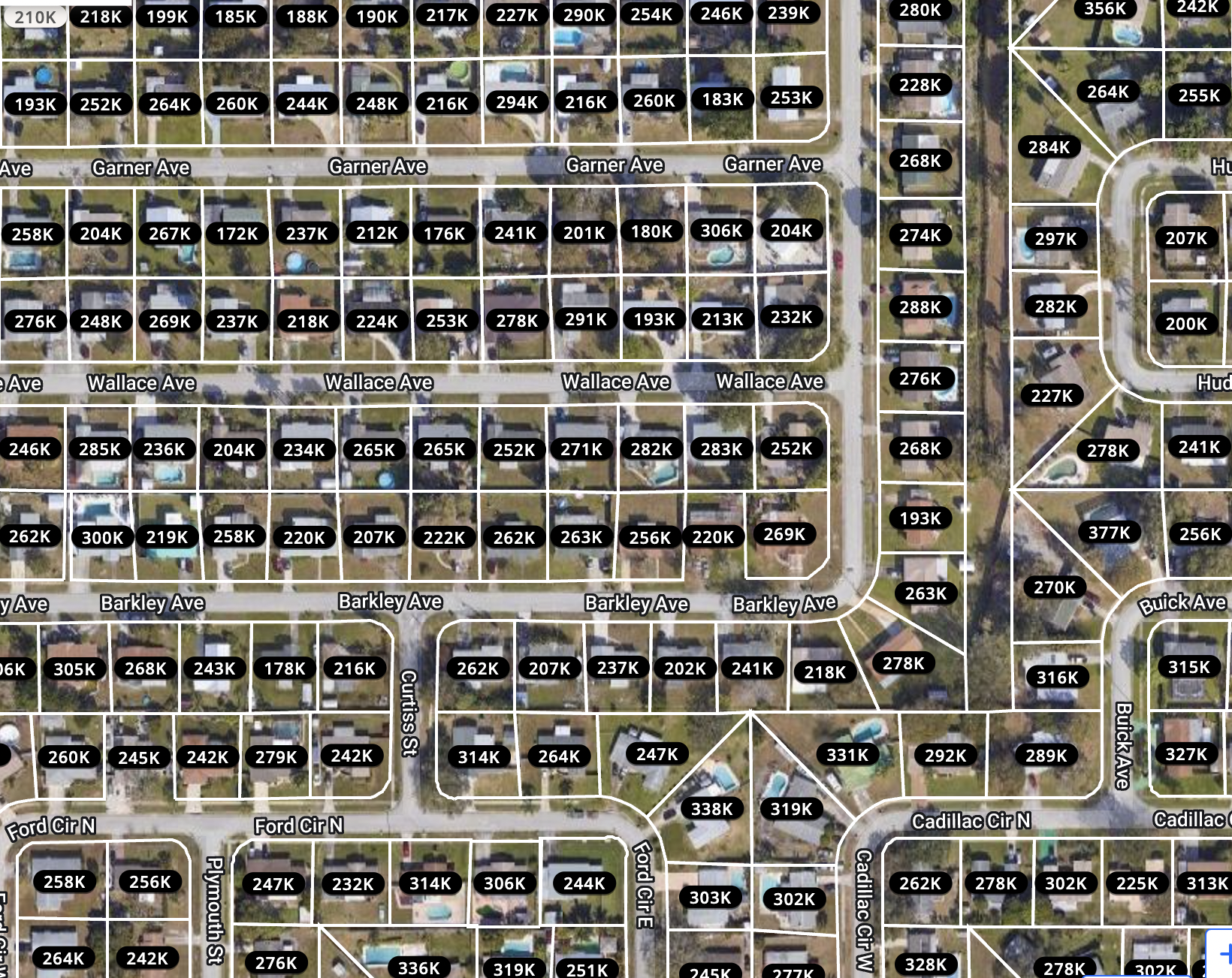}
    \caption{Zillow map with estimations  (Feb 2022)}
    \label{fig:my_label_est}
\end{figure}

In Figure~\ref{fig:my_label}, a snapshot of a Zillow township map is displayed together with listings, while Figure~\ref{fig:my_label_est} shows a Zillow map with single family house estimations. 

Prior research has used the Zillow's Zestimate together with other census bureau data to train deep neural networks for predicting price dynamics~\cite{jiang2021modeling}.

A study and simulation of Zestimate impacts on markets is made available in~\cite{fu2022human},
confirming that sellers tend to sell above the MREE estimations. Their model and simulation assume errors in estimation produce disturbances in opposite directions that would eventually cancel out, and do not address the game theoretic phenomena we raise, where errors in both directions build on each other towards a unidirectional disturbance.
The correlation with experimental data that they note may also be due to the fact that the main studied disturbance there is related to external events, outside the human-algorithm loop, namely in the impacts of the COVID emergency declaration.

Other recent empirical studies of impacts of Zestimate on market outcomes are presented in~\cite{troncoso2023algorithm,zhang2023effect,barnwell2022seller} showing that sellers tend to earn less by listing under the Zestimate, even of they may close faster. It is also shown that listings tend to deviate with amounts that are mainly above the Zestimate.

\section{Case studies of Overestimation and Underestimation Effects}

\begin{table*}
\begin{center}
\begin{tabular}{|ll|}\hline
$\lambda$ & value of home location estimated by MREE\\
$v$ & value of home building features estimated by MREE \\
$u$ & value of home building features estimated by owner \\
$\rho$ & value of market value adjustment estimated by owner \\&(i.e., owner's perceived inflation on the home) \\
$p^{MREE}$& total sell price of home estimated by the MREE\\ 
$p$& closing price of home \\ 
\hline
\end{tabular}
\end{center}
\caption{Notations}\label{table:notations}
\end{table*}

Let us show now with sample cases a realistic model of the world where both overestimation and underestimation errors by \MREE{} lead to house price inflation proportional with the absolute value of the error, even in the absence of external inflation.

Assume that \MREE{} evaluates houses as the sum of two components, a value for location $\lambda$ and a value for the building features, $v$. Also assume that the \MREE{} estimated value for the building features is constant once parsed from available data, assuming there is no inflation outside housing. 

The home owner nevertheless has another estimate for the house features, $u$. The owner also has an estimation of the market value adjustment for the house, in amount of $\rho$. The owner may also have a different estimate for the location but we assume that he can integrate it as correction into $\rho$. These notations are summarized in Table~\ref{table:notations}.

Take two neighboring houses, $A$ and $B$, for whom the \MREE{} estimates the location values to $\lambda_A$ and $\lambda_B$, respectively. Their \MREE{} estimated construction features values are $v_A$ and $v_B$, while the owner and neighbors estimated features values are $u_A$ and $u_B$, respectively.

\paragraph{Overestimation of House A by the \MREE{}}
Assume that $v_A>u_A$, namely specifying that the \MREE{} overestimates $A$. Assume $u_B=v_B$ for a correct estimation thereof.
The \MREE{} total estimated price of the House A is $$p^{\MREE{}}_A=\lambda_A+v_A.$$

The owner of A estimates his home at a lower value of $\lambda_A+u_A+\rho_A=\lambda_A+u_A<p^{\MREE{}}_A$ since $\rho_A=0$ (i.e., no inflation perceived yet).
Based on our assumption that a house will never be sold below the \MREE{} price, since the owner assumes that some buyers will trust the \MREE{},
the owner of House A will list and eventually sell House A at price $p^{\MREE{}}_A=\lambda_A+v_A$.
From the perspective of the \MREE{}, the house was sold as expected and no change will be immediately done in its estimations.

However, the neighbor owning House $B$ who can also estimate himself correctly the value $u_A$ sees the sale and
will infer
that the market values for the house features (or location) have increased, and will increment his own $\rho_B$: 

$$\rho_B \mbox{ is incremented by }  p_A-(\lambda_A+u_A).$$

Next time when B is sold, its owner and the bidders will transaction House B at price $p_B=\lambda_B+u_B+\rho_B$ which based on our aforementioned assumption that 
$$v_B=u_B,$$ 
is larger than \MREE{}'s estimation 
$p^{\MREE{}}_B=\lambda_B+v_B$ with the increment $\rho_B$.

$$ \lambda_B+u_B+\rho_B=p_B > p^{\MREE{}}_B= \lambda_B+v_B $$

$$ p_B = p^{\MREE{}}_B  + \rho_B$$

At that point, the \MREE{} takes note of the unexpected price in the selling of B (and other similar neighboring house prices), and infers that the location value
has increased to $\lambda'_B=p_B-v_B=\rho_B$.

{\it Conclusion Overestimation Case.} Therefore an overestimation of house feature values by the \MREE{} leads to a corresponding inflation in the estimation of the location, and therefore to an inflation of all surrounding house prices. It is worth noticing that this inflation will repeat each time House A is sold followed by the sale of another house in its neighborhood, since the \MREE{} will each time be surprised by the effect of the values $\rho$ and $u$ which are hidden for it.

\paragraph{Underestimation of House A by the \MREE{}}
Using the same situation of two neighboring houses let us assume
that $v_A<u_A$, namely specifying that the \MREE{} underestimates House A's construction features to $v_A$ instead of $u_A$. At the beginning $\rho_A=\rho_B=0$. The owner of House A will sell his house at the price estimated by himself, namely.

$$p_A = \lambda_A+u_A+\rho_A = \lambda_A+u_A$$

which is larger that the price estimated by the \MREE{}:

$$p_A^{\MREE{}} =\lambda_A+v_A $$

The owner of House A can sell at the real price because some of the bidders visit the home and will understand the correct value.

The \MREE{} takes note of the sale price and erroneously infers that the price of the location has increased by difference of:

$$ p_A - p_A^{\MREE{}} = u_A-v_A $$

This will increase the location estimation for all neighboring houses, including House B to $\lambda'_B$ and House A to $\lambda'_A$. When House B is sold of the new price inflated from this process, the new owner of House A will take note of the new prices for location, as an average of house estimations in the neighborhood after subtracting construction values, and will again be able to sell House A at
$\lambda'_A+u_A$, repeatedly generating inflation.

{\it Conclusion Underestimation Case.} Therefore, an underestimation of house feature values by the \MREE{} leads to a corresponding inflation in the estimation of the location, and therefore to an inflation of all surrounding house prices. 
It is worth noticing that also this inflation from underestimation will repeat each time House A is sold followed by the sale of another house in its neighborhood, since the \MREE{} will each time be induced to inflate the location price estimate.

\paragraph{Assumptions that conditioned the repeated inflation}
Here we will take note of the assumptions we had made and that led to the situation where both underestimation and overestimation errors from \MREE{} lead to repeated inflation in time even in the absence of outside inflation.
\begin{enumerate}
\item
No other inflation occurs outside the one generated by the housing market.
    \item
All sales are made at the maximum between the estimation made by the \MREE{} and the real estimation made by the house owner and some bidders, during a persistent seller market.
    \item
The \MREE{} estimates house prices as sums between the location value estimation $\lambda$ which evolves over time and the value of the construction features that is extracted from the house description and does not evolve in the absence of external inflation. 
\item
The location value propagates in neighborhood, and can be estimated by house owners as average value of \MREE{} estimations in the neighborhood minus the corresponding houses construction feature values that they know.
\item
Owners assume that a house construction feature value inflation has occurred when houses in their neighborhood are sold above the price given as sum between its real construction features value and location value.
\end{enumerate}

We note that the above assumptions can be reasonable estimations of the current housing market, and that the inflation that the model generates can be a support for the prolonged seller market with buyer exuberance as assumed~\cite{zillow_exuberance}.

\section{Model Integrating Information}

We formalize the Real Estate Prediction Problem (REPP) with the simplifying assumption that it operates under a sustainable seller's market environment and therefore it is reasonable to assume that listing and closing times are independent of market prices and evolution. The REPP problem introduces the assumption that a public information function $P$ is provided by a \MREE{} for the estimation of prices at current time.

\begin{definition}[REPP]
A Real Estate Prediction Problem (REPP) of a \MREE{} is defined by a tuple $\langle G,T,P,\Lambda\rangle$ where G is a graph $G(N,E)$ consisting of a set of nodes N representing houses and arcs E representing distances between the houses. 
An edge $e\in E$ corresponds to a distance between the nodes.

Each node $n\in N$ is a tuple $\langle v,u,\lambda^{n,0},\rho^{n,0}\rangle$ where $v$ is the value of the house from the perspective of its construction features as estimated by \MREE{} and $u$ is the objective actual construction features value estimated by owners assumed to be experts, while $\lambda^{n,0}$ is the value of the location at the initial time $t_0$, and $\rho^{n,0}$ is a market value adjustment of the construction features at the initial time.

$T$ is a vector of transactions, each $k^{th}$ transaction $\theta\in T$ being a tuple $\langle t_\theta,c_\theta,i_\theta\rangle$ where $t_\theta$ is the day of the transaction contract, $c_\theta$ is the day of transaction contract closing, and $i_\theta$ is the transacted node. The vector $T$ is ordered by closing time.

For each node $n$, $\lambda^{n,k}$ and $\rho^{n,k}$ are variables specifying the location and the construction market value estimations after the $k^{th}$ transaction closing, respectively.
For a given $k$, the set of variables $\lambda^{n,k}$ is denoted $\lambda^k$ and the set of variables $\rho^{n,k}$ is denoted $\rho^k$.

The price $p_\theta$ of each transaction $\theta$ is given by a function $P(G,\theta, \lambda^{c(\theta^k)}, \rho^{c(\theta^k)})$ where the $c(\theta^k)^{th}$ transaction is the last one closing before the contract date $t_{\theta^k}$ of $\theta^k$.

Each closing of $k^{th}$ transaction $\theta^k$ with price $p_{\theta^k}$ has an impact on the estimations of the location value of each node $n$, $\lambda^{n,k}$, given by a function $\Lambda(n,G,\theta^k,p_{\theta^k}, \lambda^{k-1})$, which cannot access the components $u$ of nodes in $G$. 
Also, after $\theta^k$ the construction feature value $\rho^{n,k}$ are given by a function $\Pi(n,G,\theta^k,p_{\theta^k}, \lambda^{k-1}, \rho^{k-1})$.

The REPP problem is to compute the estimates after the last transaction.
\end{definition}

\begin{theorem}
Assuming $P$ and $\Lambda$ are polynomial, the REPP problem can be solved in polynomial time.
\end{theorem}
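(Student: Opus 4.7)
\medskip
\noindent\textbf{Proof plan.}
The plan is to give a direct simulation algorithm that walks through the transactions in their given closing order and performs bounded work per transaction. Let $m=|T|$ and $n=|N|$. I would maintain, as the algorithm proceeds, the current vectors $\lambda^{k}$ and $\rho^{k}$ over all nodes, together with the sequence of past snapshots $(\lambda^{j},\rho^{j})$ for $j<k$; this history has total size polynomial in $m$ and $n$ and can be stored explicitly (or indexed implicitly).

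The first step is a one-time preprocessing pass over $T$ to compute, for every $k$, the index $c(\theta^{k})$ defined as the last transaction whose closing precedes the contract date $t_{\theta^{k}}$. Since $T$ is already ordered by closing time, this is either a single linear scan of $T$ or a binary search per transaction, hence polynomial. Next, I would iterate $k=1,\dots,m$ and, for each $\theta^{k}$, evaluate $p_{\theta^{k}}:=P(G,\theta^{k},\lambda^{c(\theta^{k})},\rho^{c(\theta^{k})})$ using the stored snapshot at index $c(\theta^{k})$, then update the state by setting $\lambda^{n,k}:=\Lambda(n,G,\theta^{k},p_{\theta^{k}},\lambda^{k-1})$ and $\rho^{n,k}:=\Pi(n,G,\theta^{k},p_{\theta^{k}},\lambda^{k-1},\rho^{k-1})$ for every $n\in N$. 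After the $m$-th iteration, the vectors $\lambda^{m}$ and $\rho^{m}$ are the sought estimates.

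For the complexity bound, each iteration performs one evaluation of $P$, $n$ evaluations of $\Lambda$, and $n$ evaluations of $\Pi$, plus $O(n)$ bookkeeping. By hypothesis $P$ and $\Lambda$ run in time polynomial in the input size; assuming $\Pi$ is likewise polynomial (which must be folded into the statement, or follows from it inheriting the same polynomial bound as $\Lambda$, since the two are defined analogously as functions receiving the same arguments plus $\rho^{k-1}$), the per-iteration cost is polynomial. Summing over $m$ iterations yields a polynomial total running time. I would also briefly argue that the intermediate numbers have polynomial bit-length whenever $P$, $\Lambda$, $\Pi$ are polynomial-time in the standard Turing sense, so no blow-up in representation sizes occurs.

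The only real conceptual subtlety, and what I expect to be the main (mild) obstacle, is handling the discrepancy between contract dates and closing dates: prices are set at contract date using the last closed state, while the state itself evolves on closing dates. The preprocessing step that fixes the map $k\mapsto c(\theta^{k})$ resolves this cleanly, because thereafter the simulation only ever consults a legitimate past snapshot and never needs to ``look into the future.'' Everything else is routine.
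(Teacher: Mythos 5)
Your proposal is correct and is essentially the paper's own argument, just spelled out in full: the paper's proof is the single observation that $P$, $\Pi$, and $\Lambda$ need only be applied once per transaction and per house, which is exactly your $O(m\cdot n)$ evaluation count. Your added care about the map $k\mapsto c(\theta^{k})$ and your note that $\Pi$ must also be assumed polynomial (the theorem statement omits it, though the paper's proof invokes it) are sensible refinements but do not change the route.
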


\begin{proof}
The functions $P, \Pi$, and $\Lambda$ only have to be applied once for each transaction and each house in $G$.
\end{proof}

Using solutions to REPP problems one can estimate the real estate inflation as a function of the error \MREE{} has in estimating $v$ instead of $u$.

\paragraph{Studied REPP functions}

In our reported research, we study functions:
\begin{itemize}
    \item 
$P$ defined as:

$p_{\theta^k} = P(G, \theta^k, \lambda^{c(\theta^k)}, \rho^{c(\theta^k)}) = \lambda^{i_{\theta^k},c(\theta^k)}+\max(v_{i_{\theta^k}},u_{i_{\theta^k}}+\rho^{i_{\theta^k},c(\theta^k)})$.
\item
$\Lambda$ defined as:

\noindent
\begin{eqnarray}
\lambda^{n,k} &=& \Lambda(n,G,\theta^k,p_{\theta^k}, \lambda^{k-1})= \\
&=& \label{eq:lambda-update}
\lambda^{n,k-1}+\left(\frac{p_{\theta^k}-b*v_{i_{\theta^k}}}{ 1 + a * v_{i_{\theta^k}} } -\lambda^{i_{\theta^k},k-1}\right)
\\
&&~~~~~~~~~~~~~~*ReLU\left(\frac{R-d(n,i_{\theta^k})}{R}\right)\nonumber
\end{eqnarray}

with b=1 and a=0, while R is a maximum influence distance and $d(n,m)$ is the Euclidean distance between nodes $n$ and $m$. In experiments with grid maps, $\frac{R-d(n,i_{\theta^k})}{R}$ is replaced with $\frac{R_x-d_x(n,i_{\theta^k})}{R_x}\frac{R_y-d_y(n,i_{\theta^k})}{R_y}$
where $R_x$ and $R_y$ are maximum influence distances on $x$ and $y$ coordinates, respectively, and $d_x()$ and $d_y()$ are Euclidean distance functions along projections on the $x$ and $y$ coordinates.
ReLU(x) is the Rectified Linear Unit function returning its parameter if it is positive and 0 otherwise.
\item
Function $\Pi$:
\noindent
$\rho^{n,k} = \Pi(n,G,\theta^k,p_{\theta^k}, \lambda^{k-1}, \rho^{k-1})=$\\
\hspace*{7mm}$=
\rho^{n,k-1}+ReLU\left(\frac{p_{\theta^k}-MREE_{\theta^k}}{a * \lambda^{\i_{\theta^k},k-1} + b}-u_{i_{\theta^k}}\right)$* $
ReLU\left(\frac{R-d(n,i_{\theta^k})}{R}\right)
$.

Where
$MREE_{\theta^k} = \lambda^{\i_{\theta^k},k-1}+v_{\i_{\theta^k},k-1}$
is the MREE public estimation of the value of the house in the moment of the closing. 
\end{itemize}

\section{Model With Information Opt-In}

Here we start from the assumption that the society would prefer to not have a house inflation disconnected from general inflation, and to avoid increasingly long streaks of strong house market bubbles. However, artificial intelligence is bound to continue to offer slightly erroneously information that sellers and buyers nevertheless need and the persistent existence of \MREE{} seems hard to avoid, which according to our model is bound to exacerbate such undesired phenomena. The question is how can these side effects be mitigated.

Rather than analyzing heavy governmental policies, we here propose to analyze the impact of a common consumer freedom approach involving protection of privacy based restriction of information listing by \MREE{}s to those real estate properties that opt-in.

We further assume that properties that are not listed and made handily available to market participants, will therefore not generate neighbor resentment and correction of the type captured by our model in the variables $\rho$.

We will assume that houses not opting-in still have access to their own \MREE{} estimation at the time of transactions.
With a large percentage of the houses not opting in the \MREE{} information system, 
we assume that the lack of information leads to not updating the $\rho$ variables on transaction with houses not opted-in the system, while their transaction is based on user utility and MREE location estimation:

$p_{\theta^k} = P(G, \theta^k, \lambda^{c(\theta^k)}, \rho^{c(\theta^k)}) = \lambda^{i_{\theta^k},c(\theta^k)}+u_{i_{\theta^k}}+\rho^{i_{\theta^k},c(\theta^k)}$.

This modification conducts to reduction of error propagation into inflation due to construction value overestimation. In our model where the errors of the published \MREE{} price estimation is equally spread into overestimation and underestimation, the expectation is that at most half of the error impacts could be reduced, the errors due to overestimation. This expectation is confirmed by simulations reported in the next section. The impact can be increased when the opt-in based \MREE{} is biased more to overestimation than underestimation at the same absolute error range.

\section{Simulations Discussion}

We use simulations to verify the consistency of our model  and to validate the theoretically inferred expectations described in prior sections.

We simulate a square city of 1001 by 1001 equally distanced houses.
Firstly, key location points are selected in the nodes of a grid with cell sides given by a parameter NEIGHBORHOOD-SIZE sampled between 5 and 25 in our experiments. A location value $\lambda$ is given to each of these key points, selected randomly in the range $\lambda_{min}=\$30000$ to \$230000. The initial location value of each intermediary house is computed with bi-linear interpolation between the location values of these grid points.

Further we initialize the estimation of the construction feature valuations $v$ for each house to a value $x$ set randomly between \$100000 and \$600000, scaled by multiplication with a factor telling how much more expensive the location is compared to the cheapest location, thereby simulating the fact that expensive locations tend to have more expensive construction features.

$$v=x*(1+\frac{\lambda-\lambda_{min}}{\lambda_{min}*10})$$

The error-free construction feature valuations is set to $v\pm{}\frac{\epsilon}{2}$, where $\epsilon$ is the \MREE{} error
randomly generated in a range given by the simulation instance, between \$5000 and \$50000.

Based on an assumption that in average each house is resold every 2000 days (approx. 5.5 years),
every day a fraction of 0.05\% of the houses are listed for sale, and a listing is available in average for 5 days before a sale offer is received. This is implemented by starting offers of the same number of houses as the number of new listings, 5 days after the beginning. A transaction happens 30 days after the offer, and it is at the moment of the transaction that $\lambda$ and $\rho$ values are updated according to the aforementioned model.

\paragraph{Simulation I: Inflation based on estimated price.}

\begin{figure}[!th]
    \centering
    \includegraphics[scale=0.6, trim=0 0 0 0, clip=]{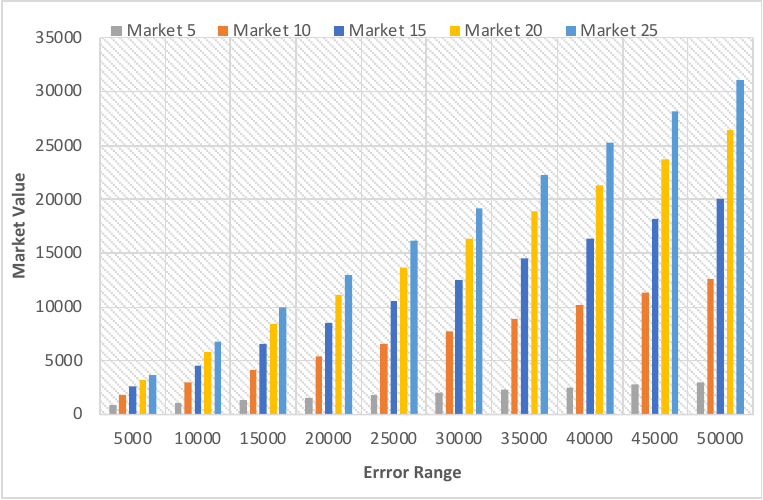}
    \caption{Market inflation as function of absolute error range, various neighborhood sizes: 5 houses to 25 houses}
    \label{fig:market_by_error}
\end{figure}

In a first reported experiment we tested the evolution of the simulated housing market as a function of the absolute error range in which the \MREE{} estimates house values.  The range was sampled between \$5000 and \$50000 with increments of \$5000. The experiment was run separately for markets where the $R$ NEIGHBORHOOD-SIZE parameter in the \MREE{} location update equation~\ref{eq:lambda-update} is set to either 5, 10, 15, 20, or 25 houses.
The results are shown in Figure~\ref{fig:market_by_error}. They show a linear relation with a slope that is proportional with both the absolute error range and the neighborhood size, as expected from the theoretic model. The largest gain in inflation occurs when the neighborhood size increases from 5 to 10 houses distance.

\begin{figure}[!th]
    \centering
    \includegraphics[scale=0.6, trim=0 0 0 0, clip=]{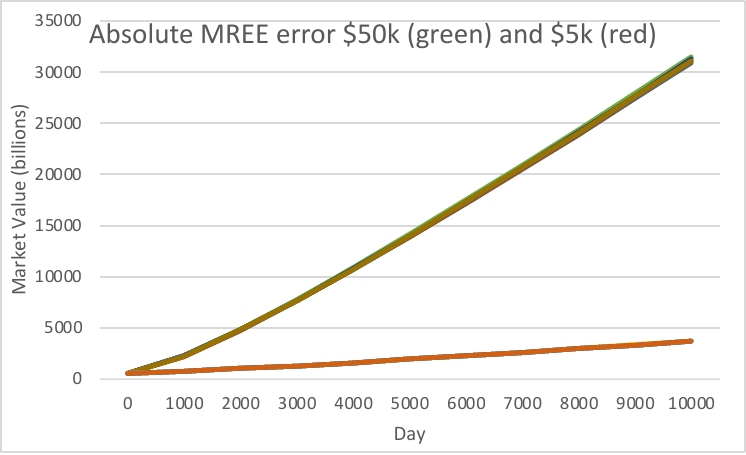}
    \caption{Market inflation as function of time in days for the two extreme absolute error ranges considered: \$5k houses and \$50k houses (green) at neighborhood distance 25}
    \label{fig:market_by_day_house_price}
\end{figure}

In another set of experiments depicted in Figure~\ref{fig:market_by_day_house_price} we verify the impact of relative house construction value on inflation. 
The inflation is shown to be driven faster in the presence of expensive houses. This simulation was performed with neighborhood size of 25.

\begin{figure}[!th]
    \centering
    \includegraphics[scale=0.6, trim=0 0 0 0, clip=]{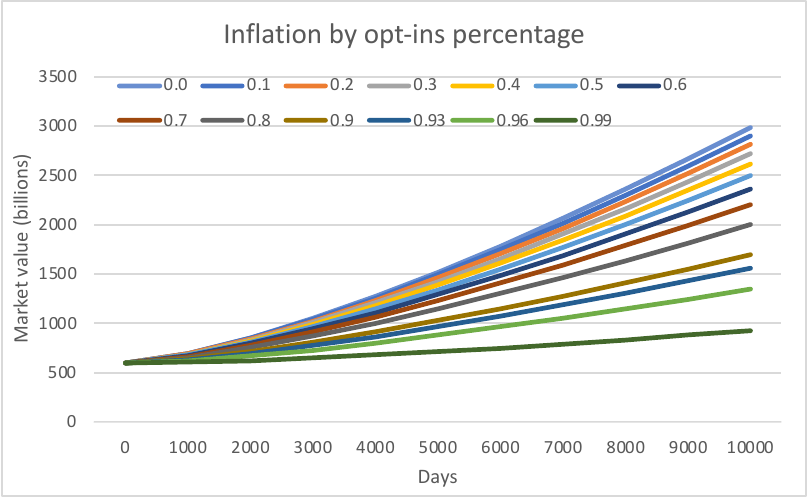}
    \caption{Market inflation as function of time at different percentages of opt-in for listings absolute error range \$10k 
     at neighborhood distance~10. The numbers next to collors specify the  percentage of homes who do not opt in.}
    \label{fig:market_by_day_optin}
\end{figure}
\subsection{Simulation II: Inflation with opt-in MREE listings.}

Another set of experiments verifies the impact of an opt-in policy, and the results are summarized in Figure~\ref{fig:market_by_day_optin}. Each curve is associated with a percentage of people who did not opt in.
It can be observed that when everyone opts-in the MREE (the blue line) the inflation is maximized while with 1\% opt-in (dark curve), the observed inflation was significantly reduced.

Our experiments do not factor in other external constraints, like the lack of money in the system that can appear at various moments and put pressure on reducing or even temporarily freezing the inflationary trend. However, such effects would 
occur at other price equilibrium that is different from the one in the absence of the MREE AI-human feedback loop.

\section{Conclusions}
The increased presence of AI products in society has raised concerned about the recursive impact on itself and unexpected consequences in society. A recurrent recent 
worry pertains to the possibility that outputs from LLMs will be used to taint training data for future models.

In the same line of thought, but in a very different application of Artificial Intelligence, we model ways
in which price prediction models employed by Mass Real Estate Estimators (MREEs), like Zillow, can have recursive effects through society to future estimates, yielding new phenomena in the real estate markets, and we look for models of impacts that explain recent unusual markets behaviors. What we obtain is a repetitive theoretical game between the MREE and the home owners, where each player has secret information and expertise. In particular, we build on the observation that due to game theoretic effects, both overestimation and underestimation errors from MREE yield increasing inflationary pressure, while inflation dampening effects would mainly come from external factors like the lack of cash liquidity in the market, world events, regulations, and significant oversupply.

A formal parametric model is developed that explain how both underestimation and overestimation errors in MREE evaluations of individual home construction features value can have recursive ripple effects into global inflation.
A simulator is built for a town with a grid of homes and realistic distribution of location values and home construction features, where the MREE produces estimations that are rational given its information and according to the formal model proposed, and where the home owners react rationally given their knowledge.

We also simulate the effect of possible government control policies. In particular we find that an opt-in requirement for allowing publications of listings by MREE can have a
significant softening impact on the amplifying factor of the inflationary feedback loop created by inevitable estimation approximation errors.
While the impact of external disasters and lack of liquidity may increase or dampen inflationary pressure, price equilibrium would be achieved at other values than in the absence of the MREE AI-human feedback loop.

\bibliographystyle{flairs/flairs}
\bibliography{main.bib}

\end{document}